\newtheorem{remark}{Remark}
\newtheorem{prop}{Proposition}
\newtheorem{assumption}{Assumption}
\newtheorem{corollary}{Corollary}
\newtheorem{definition}{Definition}
\newcommand\copyrighttext{%
  \footnotesize This work has been submitted to the IEEE for possible publication. Copyright may be transferred without notice, after which this version may no longer be accessible.}
\newcommand\copyrightnotice{%
\begin{tikzpicture}[remember picture,overlay]
\node[anchor=south,yshift=10pt] at (current page.south) {\fbox{\parbox{\dimexpr\textwidth-\fboxsep-\fboxrule\relax}{\copyrighttext}}};
\end{tikzpicture}%
}
\title{\LARGE \bf
An Adaptive Control Approach to Treatment Selection for Substance Use Disorders
}
\author{Eric Pulick and Yonatan Mintz
\thanks{E. Pulick and Y. Mintz are with the Department of Industrial and Systems Engineering,  University of Wisconsin - Madison,
        1513 University Ave, Madison, WI, USA, 
        {\tt\small \{pulick\},\{ymintz\}@wisc.edu}}%
}
\begin{document}

\maketitle
\thispagestyle{empty}
\pagestyle{empty}

\begin{abstract}
Despite the massive costs and widespread harms of substance use, most individuals with substance use disorders (SUDs) receive no treatment at all. Digital therapeutics platforms are an emerging low-cost and low-barrier means of extending treatment to those who need it. While there is a growing body of research focused on how treatment providers can identify which patients need SUD support (or when they need it), there is very little work that addresses how providers should select treatments that are most appropriate for a given patient. Because SUD treatment involves months or years of voluntary compliance from the patient, treatment adherence is a critical consideration for the treatment provider. In this paper we focus on algorithms that a treatment provider can use to match the burden-level of proposed treatments to the time-varying engagement state of the patient to promote adherence. We propose structured models for a patient’s engagement over time and their treatment adherence decisions. Using these models we pose a stochastic control formulation of the treatment-provider’s \textit{burden selection problem}. We propose an adaptive control approach that estimates unknown patient parameters as new data are observed. We show that these estimates are consistent and propose algorithms that use these estimates to make appropriate treatment recommendations.
\end{abstract}
\copyrightnotice

\section{INTRODUCTION}

The individual and societal costs associated with substance use, most notably alcohol, nicotine, and opioids, are massive. Globally, the use of these substances accounts for an estimated ten million premature deaths per year \cite{volkowSubstanceUseDisorders2023}. In the United States (US), alone, excessive alcohol use is responsible for roughly \$249 billion in societal costs annually \cite{witkiewitzAdvancesScienceTreatment2019a}. The societal costs of illicit drugs are estimated to be comparable to those of alcohol use \cite{proctorContinuingCareModel2014}.

Fundamental to the long-term treatment of substance use disorders (SUDs), a process called continuing care, are the challenges of patient engagement and adherence \cite{mclellanDrugDependenceChronic2000}. Unlike with acute care, such as emergency overdose treatment, in continuing care patients actively choose whether or not to adhere to treatment recommendations. Thus, it is crucial that a provider understand what types of treatments a given patient is likely to adhere to. The patient's current capacity for engagement, a mental construct that evolves over time, and the burden-level of the recommended treatment are expected to be critical factors in patient adherence decisions \cite{nahum-shaniMobileAssistanceRegulating2021}. While high-burden treatments may effect greater behavioral changes than low-burden treatments, any changes at all rely on the patient's adherence. A successful provider model must adapt recommended treatments to the patient's current capacity for engagement to best support their recovery.

In this work we propose a fully personalized discrete-time model for a patient's engagement capacity and a model for their adherence decisions to recommended treatments. Using these models, we define an adaptive control framework that estimates patient parameters over time and uses these estimates to select effective treatments (specifically to choose the burden-level of the treatment). The authors are not aware of any existing work proposing such models for SUD treatment. This type of structured modeling is particularly motivated by the success of low-dimensional linear models for predicting relapse risk in alcohol use disorder \cite{pulickIdiographicLapsePrediction2025}. The closest work to ours is that of \textcite{ghoshMiWavesReinforcementLearning2024}, which uses a reinforcement learning approach to choose whether or not to send messages to cannabis users to reduce their use. An important difference in our work is the use of fully person-specific models for each patient and the explicit definition of engagement dynamics for use in treatment planning. The remainder of this section provides additional background on SUDs, digital therapeutics platforms and how they can be used for SUD treatment, and an outline for the rest of the paper.

\subsection{Substance Use and Substance Use Disorders}
SUDs, such as alcohol use disorder (AUD), are defined by the \textit{Diagnostic and Statistical Manual of Mental Disorders} (DSM-5) \cite{americanpsychiatricassociationDiagnosticStatisticalManual2013}. Diagnosis is based on whether an individual meets certain problematic characteristics (e.g., substance use that impacts work, school, or other obligations). Estimates suggest that roughly 20 million individuals in the US meet the diagnostic thresholds for an SUD \cite{welty2019substance} and 30\% of the population will meet the diagnostic thresholds for AUD, alone, within their lifetime \cite{grantEpidemiologyDSM5Alcohol2015}. SUDs are typically chronic and relapsing disorders, with one-year return-to-use rates estimated as high as 40\% to 60\% \cite{mclellanDrugDependenceChronic2000,volkowSubstanceUseDisorders2023}.

While SUDs were historically treated as acute conditions, for instance focusing on detoxification, clinicians now advocate treating SUDs like other chronic diseases such as diabetes or hypertension \cite{mclellanDrugDependenceChronic2000,mclellanCanSubstanceUse2013,proctorContinuingCareModel2014}. For these conditions, treatment focuses on long-term monitoring and continuing care (i.e., lower intensity, but sustained treatment efforts) \cite{proctorContinuingCareModel2014,mclellanCanSubstanceUse2013}. Core elements of continuing care for SUDs, particularly AUD, are behavioral interventions (e.g., cognitive behavioral therapy or brief interventions) \cite{witkiewitzAdvancesScienceTreatment2019a,volkowSubstanceUseDisorders2023}. Unfortunately, SUD treatment infrastructure is historically under-resourced  and only a small minority of those with SUDs receive continuing care \cite{sociasAdoptingCascadeCare2016}. Often due to cost and under-availability of resources, recent US estimates suggest that only 10\% of individuals with SUDs receive \textit{any} form of treatment \cite{welty2019substance}. The model we propose in this paper begins to address this gap by providing an effective way for providers to best leverage their resources to maximize patient care.

\subsection{Digital Therapeutics}
The ubiquity of mobile devices has opened up exciting new treatment pathways for SUDs. Digital therapeutics platforms (DTPs), such as A-CHESS \cite{gustafsonSmartphoneApplicationSupport2014c}, can be installed on a patient's smartphone and provide a rich interface between patient and provider. DTPs can both gather relevant patient data and curate treatments, such as behavioral interventions or support tools within the application. While digital interventions are still relatively new, randomized controlled trials suggest that such approaches are at least as effective as existing, clinician-delivered interventions \cite{kilukRandomizedClinicalTrial2018}. Most importantly, DTPs are a low-cost and low-barrier way to extend care to those with SUDs who currently receive no treatment at all.

DTPs can be especially impactful when used to build personalized treatment models. Within medicine, there is a broad recognition of the value of personalizing treatment to the unique needs of each patient \cite{goetzPersonalizedMedicineMotivation2018}. This is echoed specifically in the psychopathology literature, with recent work highlighting how mental health conditions, such as SUDs, can often present or progress differently between patients \cite{wrightPersonalizedModelsPsychopathology2020}. The growing popularity of just-in-time-adaptive-interventions (JITAIs) reflects this need for individualized treatments; JITAIs tailor both the timing and the type/intensity of interventions such that they best serve the time-varying mental state of the patient \cite{nahum-shaniJustinTimeAdaptiveInterventions2018}.

While there are a growing number of successful modeling approaches using mobile-collected SUD patient data to predict \textit{who} needs support or \textit{when} they need it (e.g., for AUD \cite{coughlinJustinTimeAdaptiveIntervention2021,wyantMachineLearningModels2024,pulickIdiographicLapsePrediction2025}), there is little work addressing \textit{which} SUD treatments to recommend to a given patient. Identifying the most effective treatments for a patient is multi-faceted, for instance depending on the cognitive burden of the intervention, its content, its type/format, or its intensity. This paper focuses specifically on how to choose the most appropriate effort- or burden-level of an intervention based on the patient's current capacity for engagement with treatment \cite{nahum-shaniScienceEngagementDigital2024a}.

\subsection{Outline}
Section~\ref{sec:model} introduces our engagement/adherence models and the stochastic control formulation of the provider's $\textit{burden selection problem}$. Section~\ref{sec:params} describes how we estimate unknown patient parameters from observed data and a proof of the estimator's statistical consistency. Section~\ref{sec:algos} introduces two control algorithms for the burden-selection problem. Broadly, these algorithms estimate a patient's unknown parameters from observations and use value iteration to obtain a certainty-equivalence optimal policy. The algorithms differ primarily in how they select treatment actions; one follows the policy given by the maximum likelihood estimates, while the other uses a Thompson-sampling type approach for action selection. Section~\ref{sec:compute} discusses the performance of these algorithms for synthetic patients and practical considerations for their use.

\section{Treatment Model}\label{sec:model}
In this section we discuss our proposed models for treatment adherence and for optimal treatment recommendation. We begin with practical considerations to motivate the assumptions in our proposed treatment adherence model. We next pose the full information stochastic control problem to give insight into the model's structure. Last, we discuss how the engagement dynamics are, in reality, only partially observed and we pose the partial-information control problem.
\subsection{Practical Motivation}
This paper assumes a once-daily interaction model for SUD continuing care. Past work suggests that this quantity of interaction is well-tolerated by patients \cite{moshontzProspectivePredictionLapses2021a,wyantAcceptabilityPersonalSensing2023a}. In this framework, the DTP prompts a patient to complete a brief check-in survey when they wake up. The DTP uses these measurements to identify the particular mental constructs (e.g., stress, craving, etc.) that are driving that patient's risk as part of a personalized risk model \cite{pulickIdiographicLapsePrediction2025}. The DTP then provides the patient with a treatment recommendation for that day. Similar to \cite{nahum-shaniMobileAssistanceRegulating2021}, we categorize possible treatments as either low-burden (e.g., reading a brief reminder from the platform about strategies to mitigate craving) or high-burden (e.g., completing an involved exercise on the DTP for managing craving).

Once given a recommendation by the platform, the patient can either choose to adhere (i.e., complete the treatment) or not.  The provider receives a reward only if the patient adheres to the treatment recommendation (e.g., read the reminder or completed the suggested exercise). In this paper, we assume that the content of a recommendation is suitably personalized to the patient and instead focus on the platform's burden selection problem: each day, the platform must choose between recommending low- or high-burden treatments to best support the patient's recovery.

\subsection{Adherence Model Definition}
With this framework as motivation, we define the following quantities, noting that subscripts denote a quantity's value on a given day, such as day $t$:
\begin{itemize}
    \item $x_t \in \mathcal{X} \subseteq \Re_+$ is the latent state representing the patient's capacity for engagement with the treatment platform;
    \item $a_t \in \mathcal{A} = \{\ell,h\}$ denotes the treatment recommended by the platform, categorized as either low-burden ($\ell$) or high-burden ($h$);
    \item $d_t \in \{0,1\}$ is the binary adherence decision made by the patient;
    \item $c_\ell,c_h\in \Theta^c \subseteq \Re+$ are the adherence costs of the two types of treatment recommendations. In practice, we fix $c_h=1$ to ensure model identifiability;
    \item $\lambda_\ell,\lambda_h \in \Theta^{\lambda} \subseteq \Re_+$ parameterize the probabilities that a patient adheres to a given treatment recommendation;
    \item $b \in \Theta^b \subseteq (0,1)$ denotes a recovery rate for the patient's engagement capacity. 
\end{itemize}

We propose the following discrete-time linear model for the daily evolution of the patient's engagement state:
\begin{equation}
    x_{t+1} = f(x_t,a_t,d_t)= bx_t + c_{a_t}d_t \label{eq:dynamics}
\end{equation}
We assume that the patient's adherence decisions are conditionally independent given the current state and chosen treatment, with $d|x,a\sim\text{Ber}(p_a(x))$. We define the adherence probability as $p_a(x)=e^{-\lambda_{a} x}$. These dynamics imply that the patient's engagement state recovers toward 0 over time per the parameter $b$. Successful adherence carries a corresponding engagement cost ($c_\ell$ or $c_h$) that pushes the state temporarily away from 0. Larger engagement state values imply lower probabilities of adherence. The $\lambda_\ell$ and $\lambda_h$ parameters essentially dictate how quickly adherence probabilities decay from 1 with respect to the engagement state. For brevity, we define $\bar x = \frac{c_h}{1-b}$ and note that these dynamics imply a bounded state space, $\mathcal X = [0,\bar x]$ for $x_0\in[0,\bar x]$.  We believe that the use of linear models is well-motivated by past work; \cite{pulickIdiographicLapsePrediction2025} demonstrated that personalized, low-dimensional linear models are excellent approximations for the mental processes associated with AUD lapse risk. We posit that the same holds true for the mental processes associated with treatment adherence.

\subsection{Full-Information Stochastic Control Problem}
We introduce the full-information problem to provide insight into the model's structure before moving to the partial-information case. The controller makes a treatment recommendation at each time step, $a_t\in\mathcal A$, and receives a treatment reward given by $g(x,a,d)=\gamma_ad$. Note that this reward is non-zero only if a patient adheres to the recommended treatment (i.e., $d=1$). Let $\gamma_\ell$ and $\gamma_h$ denote the treatment rewards associated with the low- and high-burden treatments, respectively. We assume that the treatment rewards are fixed and known. This reflects a real-world situation where expert opinion is available to determine how valuable each treatment is to a patient's recovery. We assume that $\gamma_\ell\leq \gamma_h$, $\lambda_\ell \leq \lambda_h$, and $c_\ell\leq c_h$ such that the treatments reflect a realistic ordering in treatment effects, adherence likelihoods, and adherence costs. Let $\alpha\in(0,1)$ be a discount factor and let $\pi\in\Pi$ represent a policy, $\Pi: \mathcal X \to \mathcal A$ (note that any such policy is admissible as there are no state-related constraints on the action set). The full-information, infinite-horizon stochastic control problem is thus:
\begin{align}
    \underset{\pi\in\Pi}{\text{max}}& \quad \mathbb{E}\left[ \sum_{t=0}^\infty \alpha^t g(x_t,a_t,d_t) \right]\\
    \text{s.t.}&\; x_t = bx_{t-1} + c_{a_{t-1}}d_{t-1} && \forall t\geq 1\nonumber
\end{align}

One might naturally expect this problem to have an optimal threshold policy, namely recommending the high-burden treatment up to a particular threshold and low-burden treatments otherwise. Under such a policy the controller capitalizes on low values of $x$, where adherence probabilities are highest, to try to accrue the larger treatment reward associated with the high-burden treatment. However, our experiments using value iteration (VI) \cite{bertsekas2012dynamic} to compute optimal policies suggest that this is not uniformly true.

Though many model parameterizations display single-threshold policies (i.e., recommending the high-burden treatment until a threshold $\tau$), we found that small parameter variations can lead to counter-intuitive policy structures. Figure~\ref{fig:policy-examples} illustrates this sensitivity by considering optimal policies for three models with identical parameters except for small variations in $c_\ell$. For $c_\ell=0.3$ we observe the usual single-threshold policy structure. However, with $c_\ell=0.2$ we note a double-threshold policy where it is optimal to recommend the \textit{low-burden} treatment for $x\in[0,\tau_1]$, the \textit{high-burden} treatment for $x\in(\tau_1,\tau_2]$, and the low-burden treatment again for $x\in(\tau_2,\bar x]$. With $c_\ell=0.1$ the optimal policy is to always recommend the low-burden treatment. It is important to note that, in practice, these are unknown quantities that must be estimated from observations over time. In this two-action case, the optimal policy is easy to obtain by value iteration. However, this example case underscores that the problem structure is more complex than it may first appear.

\begin{figure}[h!]
    \centering
    \includegraphics[width=0.45\textwidth]{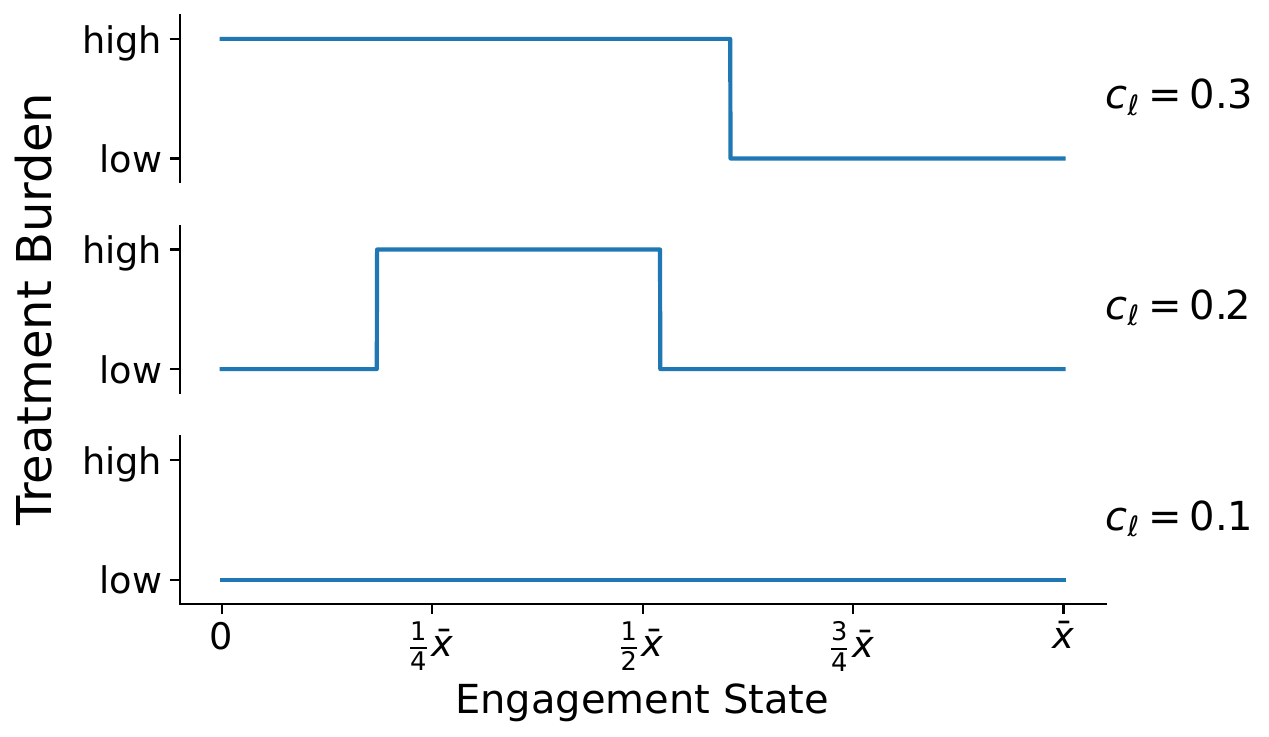}
    \caption{Different optimal policy structures obtained by varying a single parameter ($c_\ell = 0.1, 0.2, 0.3$). Policies were obtained by value iteration. The parameters $(b,\lambda_\ell, \lambda_h, c_h, \gamma_\ell, \gamma_h,\alpha)=(0.9,0.7,0.8,1,0.5,1,0.95)$ were otherwise identical across the three models. }
    \label{fig:policy-examples}
\end{figure}

\subsection{Partial-Information Stochastic Optimal Control}
In reality, we cannot directly observe the patient's latent engagement process. Thus, in the partial-information problem we assume that the structure of the dynamics are as defined in the full information problem, but that $x$ is unobserved and the patient parameters $\theta = (c_\ell,c_h, \lambda_\ell, \lambda_h, b, x_0)$ are fixed but unknown to the controller. We assume that patient adherence decisions, $d$, are always observed, as our motivating setting is a DTP that can measure the patient's adherence behavior (e.g., by confirming that the patient read a message or completed an exercise). Let $\{a_t\}_{t=0}^{n}=(a_0,\dots,a_n)$ and $\{d_t\}_{t=0}^{n}=(d_0,\dots,d_n)$ denote observed sequences of actions and adherence decisions.

\section{Parameter Estimation}\label{sec:params}
In this section we describe the maximum likelihood estimation (MLE) procedure for our problem and prove the statistical consistency of our estimates.

\subsection{Maximum likelihood estimation}
Let the current day be $n+1$. The likelihood function for a set of patient parameters $\theta$, given adherence decisions $\{d_t\}_{t=0}^{n}$ and treatment actions $\{a_t\}_{t=0}^{n}$ is given by:
\begin{multline}
    \mathcal L (\theta|\{d_t\}_{t=0}^n,\{a_t\}_{t=0}^n)=p(\{d_t\}_{t=0}^n|\theta,\{a_t\}_{t=0}^n)=\\
    \prod_{t=0}^np(d_t|x_t,a_t,\theta)\prod_{t=0}^{n-1}p(x_{t+1}|x_{t},d_{t},a_{t},\theta)\label{eq:likelihood-expansion}
\end{multline}
The Bernoulli-defined disturbance gives:
\begin{align}
    p(d_t|x_t,a_t,\theta) &= \left(e^{-\lambda_{a_t}x_t}\right)^{d_t}\left(1-e^{-\lambda_{a_t}x_t}\right)^{1-d_t}
\end{align}
And the dynamics in (\ref{eq:dynamics}) create a degenerate distribution for the state trajectory that can be expressed as constraints:
\begin{align}
    p(x_{t+1}|x_{t},d_{t},a_{t},\theta) &=
    \begin{cases}
    1, & x_{t+1}=bx_{t}+c_{a_{t}}d_{t}\\
    0, & \text{otherwise}
    \end{cases}
\end{align}
Let $x=(x_1,\dots,x_n)$ and recall $\theta=(c_\ell,c_h,\lambda_\ell,\lambda_h,b,x_0)$. Taking the log-likelihood, we can thus obtain $(\hat \theta_n^{\text{MLE}},\hat x^{\text{MLE}})$ as the argmax of the following optimization problem, given sequences $\{d_t\}_{t=0}^n$ and $\{a\}_{t=0}^n$:
\begin{align}
    \underset{\theta\in \Theta,x\in \mathcal X}{\text{max}}\;
    &\sum_{t=0}^n\left[d_t(-\lambda_{a_t}x_t)
        +(1-d_t)\log\left(1-e^{-\lambda_{a_t}x_t}\right)\right]\label{eq:mle}\\
    \text{s.t.}&\;x_{t+1} = bx_{t}+c_{a_{t}}d_{t} \quad t=0,\dots,n-1\nonumber
\end{align}
However, this problem is non-concave in general, containing bilinear terms in both the objective function and the constraints. We address this by considering the $\lambda_\ell$, $\lambda_h$, and $b$ parameters to come from finite sets. 
\begin{prop}
    For fixed $(\lambda_\ell,\lambda_h,b)$, the resultant sub-problem is concave.
\end{prop}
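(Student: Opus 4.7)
The plan is to show that, once $(\lambda_\ell,\lambda_h,b)$ are fixed, the remaining decision variables $(c_\ell,c_h,x_0)$ enter both the constraints and the objective through affine maps, and that the only nonlinear pieces of the objective are concave compositions with those affine maps.

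First, I would eliminate the state trajectory. Iterating the dynamics constraint with $b$ fixed gives the closed form
\begin{equation*}
x_t \;=\; b^t x_0 \;+\; \sum_{s=0}^{t-1} b^{\,t-1-s}\, c_{a_s}\, d_s,
\end{equation*}
which, since the $d_s$ and $a_s$ are observed data and $b$ is fixed, is affine in the remaining decision vector $(x_0,c_\ell,c_h)$. Consequently the feasible set is a polyhedron: the dynamics become affine equality constraints, the domain restrictions $c_\ell,c_h\ge 0$ and $x_t\in[0,\bar x]$ (with $\bar x = c_h/(1-b)$) are all linear inequalities in $(x_0,c_\ell,c_h)$, and the ordering assumption $c_\ell \le c_h$ is linear.

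Next, I would analyze the objective term by term. The linear pieces $d_t(-\lambda_{a_t} x_t)$ are affine in $(x_0,c_\ell,c_h)$ because $\lambda_{a_t}$ is now a constant and $x_t$ is affine. For the nonlinear pieces, define $\phi(y)=\log(1-e^{-\lambda y})$ for $\lambda>0$ and $y>0$. A short calculation gives
\begin{equation*}
\phi'(y) \;=\; \frac{\lambda}{e^{\lambda y}-1}, \qquad
\phi''(y) \;=\; -\frac{\lambda^{2} e^{\lambda y}}{(e^{\lambda y}-1)^{2}} \;<\; 0,
\end{equation*}
so $\phi$ is strictly concave on its domain. Since $x_t$ is an affine function of the decision variables and $(1-d_t)\in\{0,1\}$ is a nonnegative constant, the term $(1-d_t)\,\phi(\lambda_{a_t} x_t)$ is the composition of a concave scalar function with an affine map, scaled by a nonnegative constant, which preserves concavity.

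Summing concave and affine terms yields a concave objective, and maximizing a concave function over a polyhedral (hence convex) feasible set is a concave program, as claimed. The main obstacle is essentially bookkeeping: the problem appeared non-concave because of the bilinear products $\lambda_{a_t} x_t$ and $b x_t$, so the crux is observing that fixing $(\lambda_\ell,\lambda_h,b)$ removes every source of bilinearity, after which the concavity of $\phi$ is the only nontrivial ingredient. One minor subtlety worth flagging explicitly is the behavior of $\phi$ as $x_t \downarrow 0$ with $d_t=0$; this is handled by treating the log term with its natural extended-value extension, under which concavity still holds.
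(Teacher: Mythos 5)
Your proof is correct and follows essentially the same route as the paper's: both unroll the dynamics so that each $x_t$ becomes affine in $(x_0,c_\ell,c_h)$ once $b$ is fixed, and both then reduce concavity of the objective to concavity of $y\mapsto\log(1-e^{-\lambda y})$ composed with an affine map. The only cosmetic difference is that you verify that concavity by an explicit second-derivative computation, whereas the paper cites the composition rules in Section~3.10 of Boyd and Vandenberghe.
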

\begin{proof}
    Concavity of the subproblems is more easily seen in the following expanded formulation. Let the current time step be $n+1$ such that we have observations $\{d_t\}_{t=0}^n,\{a_t\}_{t=0}^n$ available. First, we define index sets of the form $A_n^{ a,d}=\left\{t \in \{0,\dots,n\}:a_t= a,d_t = d\right\}$. Note that $A_n^{\ell,0}\cup A_n^{\ell,1}\cup A_n^{h,0}\cup A_n^{h,1}=\{0,\dots,n\}$. For fixed $(\lambda_\ell,\lambda_h,b)$, Problem~\ref{eq:mle} reduces to:
    \begin{align}
        \underset{c_\ell,c_h,x_0,x}{\text{max}}\;
        &\begin{multlined}[t]
            \underbrace{\sum_{t\in A^{\ell,1}_n}-\lambda_\ell x_t + \sum_{t\in A^{h,1}_n} -\lambda_hx_t}_{(a)}\\
            \underbrace{\sum_{t\in A^{\ell,0}_n}\log\left(1-e^{-\lambda_\ell x_t}\right) + \sum_{t\in A^{h,0}_n} \log\left(1-e^{-\lambda_hx_t}\right)}_{(b)}
        \end{multlined}\label{eq:expanded-obj}\\
        \text{s.t.}\; x_{t+1}& = b^tx_0+\sum_{k\in A^{\ell,1}_{t-1}}b^{t-1-k}c_\ell + \sum_{k \in A^{h,1}_{t-1}}b^{t-1-k}c_h, \;\forall t \nonumber\\
        0\leq\; &x_0 \leq \bar x,\quad
        0\leq\;  c_\ell \leq 1,\quad
        c_h = 1\nonumber
    \end{align}
    Here we have expanded the objective function to separately consider each of the four possible outcomes given by the index sets and we have defined the state dynamics in terms of the initial condition $x_0$.

    The (a) terms in are linear in the optimization arguments for fixed $\lambda_\ell,\lambda_h,b$ and are thus concave. The (b) terms are concave by a functional composition argument given by 3.10 in \cite{boyd2004convex}. As concavity is preserved under summation, we have that the objective function is concave.
\end{proof}

\begin{remark}
     Maximum likelihood estimates can thus be found by finding the maximum over the set of $M$ concave subproblems indexed by the corresponding $(\lambda_\ell,\lambda_h,b)$ tuples.
\end{remark}
\begin{remark}
    In practice, $\Theta^\lambda$ and $\Theta^b$ can be made arbitrarily dense to limit the restrictiveness of this finite set assumption, at the expense of additional computational load.
\end{remark}
One additional property that will later be used for proving that our estimators are consistent is the compactness of the parameter space $\Theta$.
\begin{prop}
    The parameter space $\Theta = \Theta^{c_\ell} \times \Theta^{c_h} \times \Theta^{\lambda_\ell} \times \Theta^{\lambda_h} \times \Theta^b \times \mathcal X$ considered during maximum likelihood estimation is compact for any problem iteration $n\geq 1$.
\end{prop}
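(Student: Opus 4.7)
The plan is to verify compactness of each factor of the product space separately and then invoke the fact that a finite Cartesian product of compact subsets of Euclidean space is compact (equivalently, closed-and-bounded is preserved under finite Cartesian products in $\mathbb{R}^d$). This reduces the proposition to six routine checks, one for each component of $\theta$.

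First I would dispatch the parameters that appear with explicit box constraints in Problem~(\ref{eq:expanded-obj}). The constraint $0\leq c_\ell\leq 1$ makes $\Theta^{c_\ell}=[0,1]$ a closed bounded interval, hence compact. The constraint $c_h=1$ makes $\Theta^{c_h}=\{1\}$ a singleton, which is trivially compact. Next I would handle the parameters that were restricted to discrete grids in the paragraph preceding the previous proposition: $\Theta^{\lambda_\ell}$, $\Theta^{\lambda_h}$, and $\Theta^b$ are finite subsets of $\mathbb{R}$ and therefore compact.

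The only component that requires any care is the engagement-state factor $\mathcal X$, since it enters $\Theta$ as the feasible set for $x_0$ and its definition depends on other parameters via $\bar x=c_h/(1-b)$. I would first fix the effective value of $\bar x$ used during the optimization: with $c_h=1$ and with $b$ ranging over the finite set $\Theta^b\subset(0,1)$, set $\bar x=\max_{b\in\Theta^b}1/(1-b)$. Because $\Theta^b$ is finite and bounded away from $1$, this maximum is finite, so $\mathcal X=[0,\bar x]$ is a closed bounded interval, hence compact. This is the only place where one has to be slightly careful, since the original definition $\mathcal X=[0,c_h/(1-b)]$ in Section~\ref{sec:model} is state-dependent on $b$; the discretization of $\Theta^b$ is precisely what makes a uniform choice of $\bar x$ possible.

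Finally, I would conclude by observing that $\Theta$ is a Cartesian product of six compact subsets of $\mathbb{R}$, and so is compact in $\mathbb{R}^6$ by the Heine--Borel theorem applied factor-wise (or equivalently by Tychonoff's theorem for finite products). Since none of the six factors depends on the iteration index $n$, the statement holds for every $n\geq 1$. The main (mild) obstacle is simply being explicit that the discretization of $\Theta^b$ guarantees $\bar x<\infty$, so that $\mathcal X$ is genuinely bounded and the overall product is compact rather than merely closed.
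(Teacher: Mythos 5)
Your factor-wise argument is correct and rests on the same basic facts the paper uses: finite grids for $(\lambda_\ell,\lambda_h,b)$, box constraints for $c_\ell$ and $c_h$, a bounded interval for $x_0$, and compactness of finite products of compact sets. Your explicit handling of the dependence of $\bar x$ on $b$ (taking $\bar x=\max_{b\in\Theta^b} c_h/(1-b)$ over the finite grid) is in fact more careful than the paper's proof, which simply describes $\mathcal X$ as a bounded subset of $\Re_+$. The one substantive point the paper makes that you skip is why the iteration index $n$ appears in the statement at all: as written, Problems~\ref{eq:mle} and \ref{eq:expanded-obj} optimize over the trajectory $x=(x_1,\dots,x_n)$ in addition to $\theta$, so the nominal feasible set grows in dimension with $n$. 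The paper's proof consists almost entirely of observing that the dynamics constraints let each $x_t$ be substituted back into the objective as a function of $(x_0,c_\ell,c_h,b)$ and the observed $(a_t,d_t)$, so that every iteration in fact optimizes over the same fixed six-dimensional $\Theta$; only after that elimination does the routine ``product of compacts'' step apply. Your closing remark that none of the six factors depends on $n$ is true but does not engage with this reduction, which is the actual content of the proposition. Adding one sentence noting that the trajectory variables are determined by $(x_0,c_\ell,c_h,b)$ and the data, and hence can be profiled out of the optimization, would align your argument with the paper's and close this (mild) gap.
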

\begin{proof}
    This is observed by noting that the Problem~\ref{eq:expanded-obj} can be reformulated for any $n\geq 1$ by substituting each $x_t$ term ($t=1,\dots,n$) back into the objective such that it is in terms of only $x_0$ and other problem parameters. Thus, the problem no longer explicitly estimates the trajectory $x=(x_1,\dots,x_n)$ during optimization. The remaining optimization arguments all either come from explicitly finite sets or bounded subsets of $\Re_+$ and are thus compact. As the observations $\{d_t\}_{t=0}^n$ and $\{a_t\}_{t=0}^n$ do not influence these parameter sets, all problem iterations thus optimize over the same compact $\Theta$.
\end{proof}

\subsection{Consistency of the MLE}
Statistical consistency of estimates is critical as it implies the estimates approach their true values as additional data are collected. Here we show that the above maximum likelihood estimation procedure is consistent by proving a broader result for maximum \textit{a posteriori} (MAP) estimation, similar to \cite{aswaniBehavioralModelingWeight2019a}. Broadly, in this approach we introduce an estimator for the parameter posterior likelihood which is shown to be consistent in a Bayesian sense. We then show that MLE and MAP estimators for $\theta$ are recoverable sub-cases of this result.

We begin by applying Bayes' Theorem \cite{bickelMathematicalStatisticsBasic2015} to get an expression for the posterior distribution of $\theta$ given a sequence of observations and actions:
\begin{align}
    p(\theta|\{d_t\}_{t=0}^n,\{a_t\}_{t=0}^n) & = \frac{p(\{d_t\}_{t=0}^n|\theta,\{a_t\}_{t=0}^n)p(\theta)}{Z}
\end{align}
where $Z$ is a normalization constant such that the distribution integrates to 1. As in the MLE case, taking the log-likelihood of this expression gives the MAP estimation problem:
\begin{align}
    \underset{\theta\in \Theta}{\text{max}}
        \log\left(p(\{d_t\}_{t=0}^n|\theta,\{a_t\}_{t=0}^n) \right) + \log\left(p(\theta)\right)-\log(Z) \label{eq:map}
\end{align}
Note that the first term can be expanded per (\ref{eq:likelihood-expansion}) to recover the formulation involving the state dynamics, which can then be included as constraints. For now, though, we consider the problem without this expansion.
\begin{assumption}
   Problem~\ref{eq:map} has a unique argmax.\label{as:identifiability}
\end{assumption}
Without this assumption the subsequent arguments instead show consistency with respect to an equivalence class of parameters. In practice, though, we found that fixing $c_h$ to 1 was sufficient to satisfy this assumption.
\begin{assumption}
    The prior $p(\theta)$ is jointly log-concave in $c_\ell,c_h,x_0$. Additionally, $p(\theta) > 0$ for all $\theta \in \Theta$.\label{as:prior-nice}
\end{assumption}
As most probability distributions satisfy log-concavity \cite{boyd2004convex}, the first part of the assumption is quite mild. It ensures that Problem~\ref{eq:map} remains concave when broken into subproblems indexed by finite values of $(\lambda_\ell,\lambda_h,b)$, as in Problem~\ref{eq:expanded-obj}. The second part of the assumption is needed to ensure the true parameters are recoverable for any $\theta\in\Theta$. Importantly, the uniform prior distribution, which is represented as a constant function for all $\theta\in\Theta$ trivially satisfies both parts of this assumption.

Note that the $\log(Z)$ term in Problem~\ref{eq:map} scales the posterior likelihood, but is a constant and therefore can be removed without impacting the argmax. Let $i=1,\dots,M$ index the subproblems, each fixing some $(\lambda_\ell^i,\lambda_h^i,b^i)$. The $i^{th}$ subproblem is thus:
\begin{align}
    \underset{c_\ell,c_h,x_0}{\text{max}}\;
    &\begin{multlined}[t]
        \log\left(p(\{d_t\}_{t=0}^n|\theta,\{a_t\}_{t=0}^n) \right)
        + \log\left(p(\theta)\right) \label{eq:map-subproblem}
    \end{multlined}\\
    \text{s.t.}&;\ \lambda_\ell=\lambda_\ell^i,\;\lambda_h=\lambda_h^i,\;b=b^i\nonumber\\
    &\; 0\leq c_\ell\leq1,\;c_h=1,\;x_0\in[0,\bar x]\nonumber
\end{align}
Let $(\hat c_\ell^i,\hat c_h^i,\hat x_0^i)$ be the argmax of the $i^{th}$ subproblem. Recall $\theta=(c_\ell,c_h,\lambda_\ell,\lambda_h,b,x_0)$ and let $\psi(\theta)$ denote the evaluation of the objective function in Problem~\ref{eq:map-subproblem} for a given $\theta$. Then we have the argmax of the $i^{th}$ subproblem is $\hat \theta^i_n=(\hat c_\ell^i,\hat c_h^i,\lambda_\ell^i,\lambda_h^i,b^i,\hat x_0^i)$ and the corresponding solution value is $\psi(\hat \theta_n ^i)$. The MAP estimate for $\theta$, $\hat\theta^{\text{MAP}}_n$, is thus straightforward to obtain as $\text{max}_{i=1,\dots,M}\psi(\hat \theta^i_n)$.

However, to show our estimates are consistent in a Bayesian sense, we must obtain an estimate for the  the posterior likelihood $p(\theta|\{d_t\}_{t=0}^n,\{a_t\}_{t=0}^n)$. In theory, this could be obtained directly by evaluating Problem~\ref{eq:map}, but because $Z$ is typically difficult to calculate, we propose an alternative estimator. Specifically, we note that our iterated subproblems provide a collection of profile likelihood estimates \cite{murphyProfileLikelihood2000a}, including the MAP estimate, that we can use to estimate our uncertainty in $\theta$. We propose using the estimator:
\begin{equation}
    \hat p(\theta|\{d_t\}_{t=0}^n,\{a_t\}_{t=0}^n) = \frac{e^{\psi(\theta)}}{\sum_{i=1}^Me^{\psi(\hat \theta^i)}}\label{eq:p-estimator}
\end{equation}
As the MAP estimator $\hat\theta_n^{\text{MAP}}$ is necessarily one of the elements in the denominator, we have the useful property that $\hat p(\theta|\{d_t\}_{t=0}^n,\{a_t\}_{t=0}^n) \in [0,1]$. We next show that this estimator is consistent in a Bayesian sense \cite{bickelMathematicalStatisticsBasic2015}.
\begin{definition}
    The posterior estimate (\ref{eq:p-estimator}) is consistent if for any value of the patient's true parameters $\theta^*\in\Theta$, and constants $\epsilon,\delta>0$, we have that $\mathbb P_0\left(\hat p(\mathcal{B}'_\delta(\theta^*)|\{d_t\}_{t=0}^n,\{a_t\}_{t=0}^n)\geq \epsilon\right)\to 0$ as $n\to\infty$.
    $\mathbb{P}_0$ is the probability law given under $\theta^*$, $\mathcal{B}'_\delta(\theta^*) = \{\theta \not \in \mathcal B_\delta(\theta^*)\}$, and $\mathcal{B}_\delta(\theta^*)$ is an open $\delta$-ball around $\theta^*$.
\end{definition}
This means that a consistent posterior likelihood estimate behaves such that all probability mass concentrates within an arbitrarily small $\delta$-ball surrounding the true parameters. The final assumption required to show consistency for our posterior likelihood estimator is one of \textit{sufficient excitation}:
\begin{assumption}
    Let the patient's true parameters be $\theta^*\in\Theta$. The controller selects actions such that for any $\theta\in\Theta$ and $\delta>0$ almost surely we have:
    \begin{align}
        \underset{\mathcal B'_\delta(\theta^*)}{\text{max}}\lim_{n\to\infty} \sum_{t=0}^n\log \frac{p(d_t|\tilde x_t(\theta),a_t,\theta)}{p(d_t|\tilde x_t(\theta^*),a_t,\theta^*)}=-\infty
    \end{align}
    where $\tilde x(\theta) = (\tilde x_1,\dots,\tilde x_n)$ is the induced state trajectory under $\theta$ (i.e., the states found when solving the version of Problem~\ref{eq:map} with the dynamics as constraints).\label{as:sufficient-excitation}
\end{assumption}
This type of assumption ensures that different combinations of states and actions occur sufficiently often for proper parameter identification. It is a mild assumption that is common in the adaptive control literature \cite{astromAdaptiveControl1991}. A sufficient means for the controller to satisfy this condition is through the inclusion of a small amount of random noise in action selection \cite{bitmeadPersistenceExcitationConditions1984}.

\begin{prop}
    If Assumptions~\ref{as:identifiability}-\ref{as:sufficient-excitation} hold then the posterior likelihood estimator $\hat p(\theta|\{d_t\}_{t=0}^n,\{a_t\}_{t=0}^n)$ is consistent.\label{prop:post-consistency}
\end{prop}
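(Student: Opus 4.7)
The plan is to establish the stated probability statement by proving the stronger almost-sure claim that $\sup_{\theta \in \mathcal{B}'_\delta(\theta^*)} \hat p(\theta|\{d_t\}_{t=0}^n,\{a_t\}_{t=0}^n) \to 0$, which immediately implies convergence in probability. Because $\psi(\theta)$ appears only in the numerator of~(\ref{eq:p-estimator}), the argument reduces to upper bounding $\psi_n(\theta)$ uniformly away from $\theta^*$ and lower bounding the denominator $\sum_{i=1}^M e^{\psi_n(\hat\theta^i)}$.

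For the denominator, I would exploit the finite-set structure on $(\lambda_\ell,\lambda_h,b)$: the true parameters $\theta^*$ have discrete components matching exactly one subproblem index $i^*$, for which $\theta^*$ itself is feasible. Since $\hat\theta^{i^*}$ is that subproblem's argmax, $\psi_n(\hat\theta^{i^*}) \geq \psi_n(\theta^*)$, so $\sum_{i=1}^M e^{\psi_n(\hat\theta^i)} \geq e^{\psi_n(\theta^*)}$. This yields the uniform ratio bound $\hat p(\theta|\cdots) \leq \exp\!\bigl(\psi_n(\theta)-\psi_n(\theta^*)\bigr)$ valid for every $\theta \in \Theta$.

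For the numerator I would decompose
\[
    \psi_n(\theta)-\psi_n(\theta^*) = \sum_{t=0}^n \log\frac{p(d_t|\tilde x_t(\theta),a_t,\theta)}{p(d_t|\tilde x_t(\theta^*),a_t,\theta^*)} + \log\frac{p(\theta)}{p(\theta^*)}.
\]
Compactness of $\Theta$ (Proposition~2) together with Assumption~\ref{as:prior-nice} (positivity and log-concavity, hence continuity, of $p$) makes the log-prior difference uniformly bounded on $\Theta$. Assumption~\ref{as:sufficient-excitation} then drives the likelihood-ratio sum to $-\infty$ on $\mathcal{B}'_\delta(\theta^*)$, so the right-hand side of the ratio bound vanishes and $\hat p$ collapses to zero on that region. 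Assumption~\ref{as:identifiability} plays a supporting role by guaranteeing that $\theta^*$ is the unique maximizer, so that ``outside a $\delta$-ball of $\theta^*$'' is genuinely separated from every competing maximizer.

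The main obstacle I anticipate is the interchange of the supremum over $\mathcal{B}'_\delta(\theta^*)$ with the limit in $n$: Assumption~\ref{as:sufficient-excitation} is written as $\max_{\theta}\lim_{n\to\infty}(\cdots)=-\infty$, delivering only pointwise divergence, whereas the ratio bound requires $\lim_{n\to\infty}\sup_{\theta}(\cdots)=-\infty$. I would bridge this with compactness of the closed set $\Theta\setminus\mathcal{B}_\delta(\theta^*)$, continuity of $\theta\mapsto\tilde x_t(\theta)$ and hence of every summand (linear in $b,c_\ell,c_h,x_0$ via the closed-form trajectory used in Problem~\ref{eq:expanded-obj}), and a finite-cover argument: at any $\theta_0$, pointwise divergence plus local equicontinuity produces an open neighborhood and an index $n_{\theta_0}$ on which the partial sums eventually fall below any prescribed $-M$, and extracting a finite subcover of the compact complement yields a single $N$ valid uniformly. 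With that uniformization in hand, combining the two bounds establishes the vanishing of $\sup_{\theta\in\mathcal{B}'_\delta(\theta^*)}\hat p(\theta|\cdots)$ almost surely, which is the desired consistency.
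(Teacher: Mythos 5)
Your proposal is correct and follows essentially the same route as the paper: the same ratio bound $\hat p(\theta|\cdot) \le e^{\psi_n(\theta)-\psi_n(\theta^*)}$ (the paper obtains it by writing $\hat p(\theta|\cdot)=\hat p(\theta^*|\cdot)\cdot\hat p(\theta|\cdot)/\hat p(\theta^*|\cdot)$ and using $\hat p(\theta^*|\cdot)\le 1$, which is exactly your denominator argument via the subproblem containing $\theta^*$), the same decomposition into a constant log-prior ratio plus the likelihood-ratio sum driven to $-\infty$ by Assumption~\ref{as:sufficient-excitation}, and the same conclusion. The only step you elide is the final volume bound $\hat p(\mathcal{B}'_\delta(\theta^*)|\cdot)=\int_{\mathcal{B}'_\delta(\theta^*)}\hat p(\theta|\cdot)\,d\theta \le \mathrm{volume}(\Theta)\cdot\sup_{\theta\in\mathcal{B}'_\delta(\theta^*)}\hat p(\theta|\cdot)$ that converts the vanishing supremum of the density into the vanishing posterior mass required by the definition of consistency; your concern about pointwise versus uniform divergence in Assumption~\ref{as:sufficient-excitation} is well-founded (the paper applies the assumption as though it were uniform over $\mathcal{B}'_\delta(\theta^*)$), and your compactness/finite-cover patch is a reasonable way to close that gap.
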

\begin{proof}
    We follow a similar approach to \cite{mintzBehavioralAnalyticsMyopic2023}. First let $\theta^*$ be the patient's true parameters. First observe that:
    \begin{multline}
        \log(\hat p(\theta|\{d_t\}_{t=0}^n,\{a_t\}_{t=0}^n))\\
        =\log\left(\hat p(\theta^*|\{d_t\}_{t=0}^n,\{a_t\}_{t=0}^n) \frac{\hat p(\theta|\{d_t\}_{t=0}^n,\{a_t\}_{t=0}^n)}{\hat p(\theta^*|\{d_t\}_{t=0}^n,\{a_t\}_{t=0}^n)}\right)\\
        =\log\left(\hat p(\theta^*|\{d_t\}_{t=0}^n,\{a_t\}_{t=0}^n)\right) + \log\left(\frac{p(\theta)}{p(\theta^*)}\right)\\
        + \sum_{i=1}^n\log\left(\frac{p\left(\tilde x_t(\theta)|\tilde x_{t-1}(\theta),a_{t-1},d_{t-1},\theta\right)}{p\left(\tilde x_t(\theta^*)|\tilde x_{t-1}(\theta^*),a_{t-1},d_{t-1},\theta^*\right)} \right)\\
        + \sum_{i=0}^n\log\left(\frac{p\left(d_t|\tilde x_t(\theta),a_t,\theta\right)}{p\left(d_t|\tilde x_t(\theta^*),a_t,\theta^*\right)} \right)
    \end{multline}
    where the final step follows from separating logarithm terms, noting that by the definition of $\hat p$ in (\ref{eq:p-estimator}) the denominator terms cancel in ratios of $\hat p$, and expanding the likelihood function per (\ref{eq:likelihood-expansion}). Moving through the terms in order we note that, by construction, $\log\left(\hat p(\theta^*|\{d_t\}_{t=0}^n,\{a_t\}_{t=0}^n)\right)$ is bounded above by 0 and that the ratio of the priors is a constant. The $p(\tilde x_t(\theta)|\cdot$) quantities in the third term are all 1 as they are degenerate distributions for the state trajectory, making the collective summation 0. Using Assumption~\ref{as:sufficient-excitation}, we have that $\text{max}_{\theta \in \mathcal B'_\delta(\theta^*)}\log(\hat p(\theta|\{d_t\}_{t=0}^n,\{a_t\}_{t=0}^n))\to -\infty$ for any $\delta>0$ almost surely and equivalently $\text{max}_{\theta \in \mathcal B'_\delta(\theta^*)}\hat p(\theta|\{d_t\}_{t=0}^n,\{a_t\}_{t=0}^n)\to 0$ almost surely. Note that this implies point-wise convergence. To show uniform convergence we use a volume bound:
    \begin{multline}
        \hat p(\mathcal{B}'_\delta(\theta^*)|\{d_t\}_{t=0}^n,\{a_t\}_{t=0}^n)\\ =\int_{\mathcal{B}'_\delta(\theta^*)}\hat p(\theta|\{d_t\}_{t=0}^n,\{a_t\}_{t=0}^n)\; d\theta\\
        \leq \text{volume}(\Theta)\cdot\underset{\theta \in \mathcal{B}'_\delta(\theta^*)}{\text{max}}\hat p(\theta|\{d_t\}_{t=0}^n,\{a_t\}_{t=0}^n) \to 0 \label{eq:posterior-final}
    \end{multline}
    completing the proof as (\ref{eq:posterior-final}) holds almost surely for any $\delta>0$.
\end{proof}
\begin{corollary}
    Both the MAP and MLE estimators ($\hat\theta^{\text{MAP}}_n$,$\hat\theta^{\text{MLE}}_n$) converge in probability to $\theta^*$ as $n\to \infty$.\label{cor:estimator-consistency}
\end{corollary}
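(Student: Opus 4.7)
The plan is to reduce the corollary to the posterior-consistency result of Proposition~\ref{prop:post-consistency}. Its proof in fact establishes the stronger pointwise statement that $\max_{\theta \in \mathcal{B}'_\delta(\theta^*)} \hat p(\theta|\{d_t\}_{t=0}^n,\{a_t\}_{t=0}^n) \to 0$ almost surely for every $\delta > 0$, and I would take this as the main input.

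The key intermediate step is a uniform lower bound on the posterior estimate evaluated at the MAP. The overall MAP estimator is, by construction, one of the $M$ subproblem maximizers $\hat\theta^i_n$; write $\hat\theta^{\text{MAP}}_n = \hat\theta^{i^0}_n$ where $i^0 = \arg\max_i \psi(\hat\theta^i_n)$. Every term in the denominator of (\ref{eq:p-estimator}) is then at most $e^{\psi(\hat\theta^{i^0}_n)}$, so
\[
\hat p(\hat\theta^{\text{MAP}}_n|\{d_t\}_{t=0}^n,\{a_t\}_{t=0}^n) \;=\; \frac{e^{\psi(\hat\theta^{i^0}_n)}}{\sum_{i=1}^M e^{\psi(\hat\theta^i_n)}} \;\geq\; \frac{1}{M},
\]
deterministically for all $n$. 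On the event $\hat\theta^{\text{MAP}}_n \in \mathcal{B}'_\delta(\theta^*)$ this forces $1/M \leq \max_{\theta \in \mathcal{B}'_\delta(\theta^*)} \hat p(\theta|\{d_t\}_{t=0}^n,\{a_t\}_{t=0}^n)$, which by the input from Proposition~\ref{prop:post-consistency} can occur only finitely often almost surely. Hence $\hat\theta^{\text{MAP}}_n \in \mathcal{B}_\delta(\theta^*)$ for all sufficiently large $n$ almost surely, yielding almost-sure (and thus in-probability) convergence to $\theta^*$.

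For the MLE, I would observe that $\hat\theta^{\text{MLE}}_n$ coincides with the MAP estimator under the uniform prior on $\Theta$, which (as noted immediately after Assumption~\ref{as:prior-nice}) satisfies that assumption, so the identical argument applies verbatim. The main obstacle is the uniform lower bound $1/M$ on the posterior estimate at the MAP: without it, the vanishing of the maximum of $\hat p$ over $\mathcal{B}'_\delta(\theta^*)$ would not by itself constrain the location of the MAP. The discrete treatment of $(\lambda_\ell,\lambda_h,b)$ makes this bound immediate and is the structural feature of the formulation that the argument exploits.
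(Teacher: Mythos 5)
Your proof is correct and follows essentially the same route as the paper's: both reduce the claim to Proposition~\ref{prop:post-consistency} via the fact that the MAP estimator maximizes $\hat p$ over $\Theta$, and both recover the MLE case through the uniform prior. Your explicit $1/M$ lower bound on $\hat p(\hat\theta^{\text{MAP}}_n|\{d_t\}_{t=0}^n,\{a_t\}_{t=0}^n)$ in fact makes precise a step the paper leaves implicit, since the paper's event comparison between $\max_{\theta\in\mathcal B_\delta(\theta^*)}\hat p$ and $\max_{\theta\in\mathcal B'_\delta(\theta^*)}\hat p$ requires knowing that the former cannot also vanish.
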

\begin{proof}
    We define the events 
    $E_1=\{\hat\theta_n^{\text{MAP}}\not \in \mathcal{B}(\theta^*,\delta)\}$ and 
    $E_2=\{\text{max}_{\theta \in \mathcal{B}'_\delta(\theta^*)} \hat p(\theta|\{d_t\}_{t=0}^n,\{a_t\}_{t=0}^n) \geq 
            \text{max}_{\theta \in \mathcal{B}_\delta(\theta^*)} \hat p(\theta|\{d_t\}_{t=0}^n,\{a_t\}_{t=0}^n)\}$
    , with $\mathcal{B}_\delta(\theta^*)$ and $\mathcal{B}'_\delta(\theta^*)$ defined as before. Note that $E_1 \subset E_2$, implying that $\mathbb{P}(E_1)\leq\mathbb{P}(E_2)$. By Proposition~\ref{prop:post-consistency} we have that $\mathbb{P}(E_2)\to 0$ as $n\to \infty$, implying $\mathbb{P}(E_1)\to 0$ and $\hat\theta^{\text{MAP}}_n\overset{p}{\to}\theta^*$. Further note that we recover the MLE problem under a uniform prior, which we noted satisfies Assumption~\ref{as:prior-nice}, so $\hat\theta^{\text{MLE}}_n\overset{p}{\to}\theta^*$.
\end{proof}

\section{Proposed Control Algorithms}\label{sec:algos}
For this problem, it was computationally inexpensive to obtain optimal policies using value iteration (VI). We propose two certainty-equivalence-type algorithms reliant on VI, but using different action-selection schemes. In Algorithm~\ref{alg:eps-greedy-alg}, we estimate the model parameters each day using the MLE approach described in Section~\ref{sec:params}. We obtain a policy $\pi^{\text{MLE}}_n$ using VI with model parameters $\hat \theta_n^{\text{MLE}}$. To induce sufficient excitation, we explore the sub-optimal action given by $\pi_n^{\text{MLE}}$ per a decaying exponential with parameter $\beta$. Algorithm~\ref{alg:thompson-alg} also estimates parameters at each time step following the procedures in Section~\ref{sec:params}, but selects actions using a Thompson-sampling approach. The algorithm samples $\hat\theta^i_n$ from the $M$ subproblem solutions with probabilities $\hat p(\hat\theta_n^i|\{d_t\}_{t=0}^n,\{a_t\}_{t=0}^n)$. The algorithm uses VI with the sampled $\hat\theta_n^i$ to calculate $\pi^{i}_n$ and selects the optimal action under $\pi^i_n$. While this action selection will satisfy sufficient excitation in the limit, we noted better finite-time performance for Algorithm~\ref{alg:thompson-alg} with a brief initialization period to improve early estimates for $\hat p(\theta^i)$. 
\floatname{algorithm}{Algorithm}
\begin{algorithm}
\begin{algorithmic}[1]
    \State Alternate actions $\ell$ and $h$ for $n=1,2$
    \ForAll{$n > 2$}
        \State Estimate $\hat \theta_n^{\text{MLE}}$
        \State Obtain optimal policy $\pi_n^{\text{MLE}}$ by VI
        \State Sample $\epsilon\sim U(0,1)$
        \If{$\epsilon\leq e^{-\beta n}$}
            \State Take the sub-optimal action per $\pi_n^{\text{MLE}}$
        \Else
            \State Take the optimal action per $\pi_n^{\text{MLE}}$
        \EndIf
    \EndFor
\end{algorithmic}
\caption{MLE-$\beta$ Controller}
\label{alg:eps-greedy-alg}
\end{algorithm}
\floatname{algorithm}{Algorithm}
\begin{algorithm}
\begin{algorithmic}[1]
    \State Alternate taking actions $\ell$ and $h$ for $n=1,\dots,T$
    \ForAll{$n > T$}
        \State Estimate $\hat \theta^i_n$ per Problem~\ref{eq:map-subproblem}, $i=1,\dots,M$
        \State Select $\hat\theta^i$ with probability $\hat p(\hat\theta_n^i|\{d_t\}_{t=0}^n,\{a_t\}_{t=0}^n))$
        \State Obtain optimal policy $\pi^i_n$ by VI
        \State Take optimal action per $\pi^i_n$
    \EndFor
\end{algorithmic}
\caption{Thompson Sampling Controller}
\label{alg:thompson-alg}
\end{algorithm}

\section{Computational Results}\label{sec:compute}
\begin{figure*}[t]
    \centering
    \includegraphics[width=0.95\textwidth]{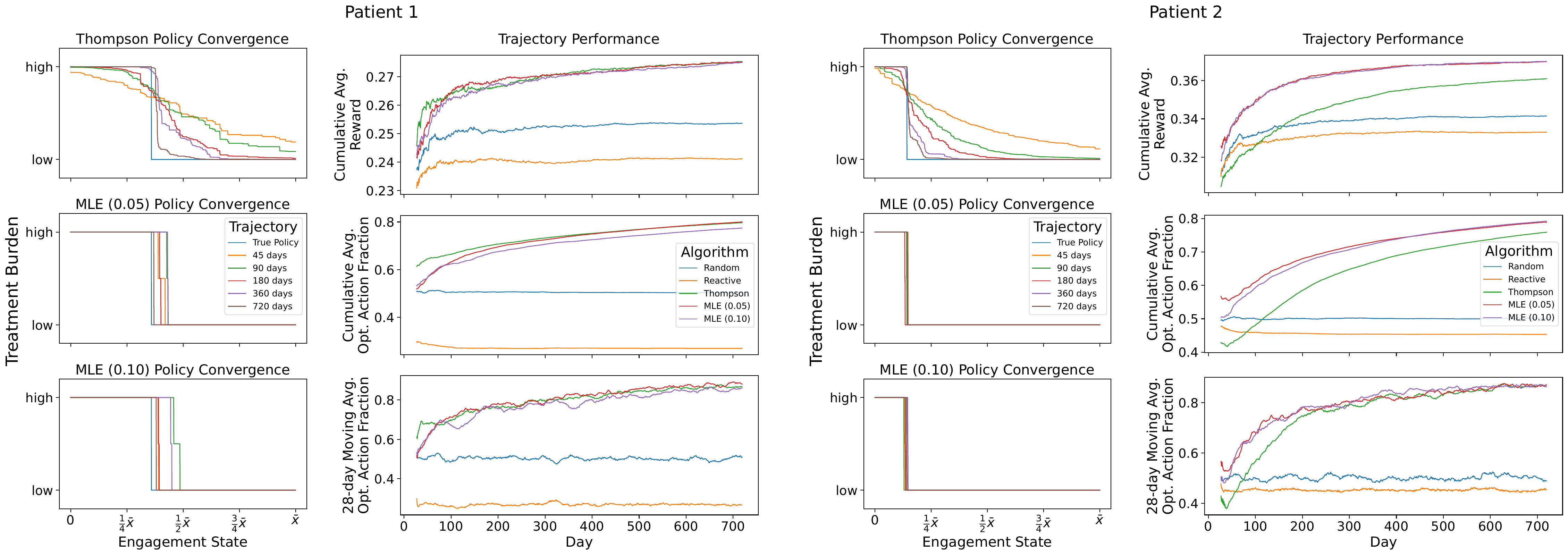}
    \caption{Performance of the tested algorithms, separated for Patient 1 (left) and Patient 2 (right). For each patient, the left three plots show the optimal policies of each algorithm after trajectories of increasing length. Policy curves are the median values across replications. For each patient, the right three plots show performance summaries for 720-day trajectories, given as the mean for each algorithm across replications. Vertically stacked plots share x-axes.}
    \label{fig:compute-results}
\end{figure*}
In this section we describe the results of our computational experiments for two synthetic patients. Computations were performed in Python using a laptop computer with a 3.2GHz processor and 16GB of RAM. The true model parameters for the synthetic patients are provided in Table~\ref{tab:model-params}.
\begin{table}[b]
\centering
\caption{True parameters for synthetic patients 1 and 2.}
\begin{tabular}{ccc} \toprule
Parameter & Patient 1 & Patient 2  \\ \midrule
Adherence costs $(c_\ell,c_h)$ & (0.7,1) & (0.1,1)\\
Adherence parameters $(\lambda_\ell,\lambda_h)$ & (0.4,1) & (0.2,1)\\
Recovery parameter $(b)$ & 0.8 & 0.8 \\
Initial state $(x_0)$ & 0.5$\bar x$ & 0.5$\bar x$\\
Adherence rewards $(\gamma_\ell,\gamma_h)$ & (0.5, 1) & (0.4, 1)\\
Discount factor $(\alpha)$ & 0.95 & 0.95\\
\bottomrule
\end{tabular}\label{tab:model-params}
\end{table}
The patients share numerous parameters, for example the high-burden treatment cost $c_h$, high-burden adherence parameter $\lambda_h$, and recovery parameter $b$. Note that identical $b$ and $c_h$ quantities imply an identical bounded state space between the patients (recall $\bar x = \frac{c_h}{1-b}$). However, there are a few key differences to note. Namely, Patient 2 has a substantially lower cost associated with the low-burden treatment ($c_\ell$) than Patient 1. Patient 2 also has a smaller low-burden adherence parameter ($\lambda_\ell$). Recall that the functions that generate adherence probabilities are decaying exponentials, so lower values of $\lambda$ correspond to flatter decay curves over the engagement state space. Adherence rewards are similar between the two individuals. We chose these model parameters to reflect two plausible real-world cases; Patient 1 represents an individual for whom both intervention-types are relatively costly, while Patient 2 represents an individual who perceives a large difference between low- and high-burden treatment types. Both patients displayed single-threshold optimal policies, but Patient 2's threshold was lower than that of Patient 1. 

To evaluate performance, we simulated trajectories of 45, 90, 180, 360, and 720 days using our proposed algorithms. As Algorithm~\ref{alg:eps-greedy-alg} uses an epsilon-greedy exploration approach, we ran versions of the algorithm with $\beta$ values of 0.05 and 0.10, referred to as `MLE (0.05)' and `MLE (0.10),' respectively. We found that Algorithm~\ref{alg:thompson-alg}, referred to as `Thompson,' had better performance with a short random initialization period ($T=10$). We benchmarked against two heuristic approaches, `Random' and `Reactive'. `Random' took low- and high-burden treatment actions with equal probability while `Reactive' naively chose the high-burden treatment if the patient adhered in the previous time step and the low-burden treatment otherwise. Note that VI used a 3000-step discretization of the state space and a value function convergence tolerance of 0.001. Experiments considered subproblems indexed by tuples built from finite sets $\Theta^{\lambda}=\{0.2,0.4,0.6,0.8,1\}$ and $\Theta^b=\{0.6, 0.7, 0.8,0.9\}$ for both patients.

Algorithm performance from 100 replications for each patient and algorithm is summarized in Figure~\ref{fig:compute-results}. These figures show the policies for each algorithm after fixed-length trajectories. The MLE algorithms have the expected step-shape as they come from a single policy using the maximum likelihood parameters. The Thompson algorithm is shown using a weighted average of the policies given by each possible subproblem parameter in its uncertainty distribution. The purpose of these plots is to confirm that the optimal policies approach the true optimal policy for the individual as the trajectory length grows. Indeed, we see this to be the case for both patients under Algorithms~\ref{alg:eps-greedy-alg} and \ref{alg:thompson-alg}.

Figure~\ref{fig:compute-results} also contains three summaries of 720-day trajectory performance. First is the cumulative average reward obtained by the controller, measured as the average reward up to each time step. Next is the cumulative average fraction of the controller's actions that are optimal, measured against an oracle with access to the patient's true parameters. Last is a 28-day moving average of the optimal action fraction, giving insight into local performance during the trajectory.

Broadly, both of our proposed algorithms meaningfully outperformed the heuristic benchmarks. For both patients, we observe that MLE (0.05) generally outperformed MLE (0.10); MLE (0.05) has a slower-decaying exploration curve, highlighting the value of a longer period of exploration in this context. Additionally, the 28-day averages for optimal action fraction become quite similar across algorithms within approximately 200 days, suggesting a general equivalence of algorithms in the long run. 

However, we see important differences between the algorithms in the early portions of the trajectory. In particular, the early-trajectory Thompson algorithm performance for Patient 2 is actually worse than Random and Reactive. This is especially interesting compared to Patient 1, where Thompson shows the best early-trajectory performance across all methods. We suspect that a reason for this difference is the location of the optimal threshold relative to typical engagement state values observed during the patient's trajectory. The position of Patient 1's threshold meant that the optimal action was frequently changing between low- and high-burden treatments, while the low-burden action was typically optimal for Patient 2. Post-hoc analysis suggested Thompson's poor exploration for Patient 2 was due to premature posterior collapse. Thus, while the Thompson sampling approach shows promising exploration performance for some patients (e.g., Patient 1) this may not extend to all other patient types (e.g., Patient 2). We suspect that this is because uncertainty in subproblem parameters is profiled out during estimation, particularly $c_\ell$, which was shown to have important effects on the optimal policy structure in Figure~\ref{fig:policy-examples}. In future work we plan to address this with new approaches to incorporate subproblem parameter uncertainty into action selection. This could be done, for instance, by solving additional subproblems that consider restricted domains for parameters like $c_\ell$. This would give a richer representation of the posterior at the expense of additional computational burden. Another approach would be the inclusion of priors, for instance from randomized controlled trials, which is known to improve Thompson-type exploration \cite{zhouEvaluatingMachineLearning2018}. Because we rely on synthetic patient data, we chose not to incorporate priors into estimation (though we know MAP estimators to be consistent by Corollary~\ref{cor:estimator-consistency}). We expect that representative prior information could drastically improve the early-trajectory performance of both of our algorithm types. 

\section{Conclusions}
In this paper we proposed a novel adaptive control framework for SUD treatment selection. We introduced structured adherence and engagement models for patients and the corresponding stochastic control problem for the provider. We proposed an estimation procedure to identify the patient's unknown parameters and proved the statistical consistency of these estimates. We provided two certainty-equivalence algorithms for treatment selection and evaluated their performance on synthetic patients. We note that the underlying treatment selection problem appears to have interesting structural properties, even in this two-action setting. We plan to explore these structural properties under broader action sets in future work.



\printbibliography

\end{document}